\documentclass[conference]{IEEEtran}

\usepackage{tikz}

\usepackage{xcolor}

\usepackage{amsmath}
\usepackage{amsthm}
\usepackage{amssymb}
\usepackage{graphicx} 
  	\graphicspath{{./figures_v2/}}
\usepackage{subfigure}
\usepackage{url}
\usepackage{booktabs} 
\usepackage{array} 
\usepackage{verbatim} 
\usepackage{caption}

\usepackage{enumitem}
\usepackage{bbm}

\usepackage[font=footnotesize]{caption}

\newtheorem{theorem}{Theorem}

\newtheorem{definition}{Definition}
\newtheorem{lemma}[theorem]{Lemma}
\newtheorem{problem}{Optimization Problem}

\usepackage{color}

\newcommand{\eq}[1]{Eq.~\eqref{#1}}
\usepackage{soul}
\setstcolor{red}

\newcommand{\oldtext}[1]{}

\newcommand{\myitem}[1]{\vspace{0.25\baselineskip}\noindent\textbf{#1}}
\newcommand{\myitemit}[1]{\vspace{0.25\baselineskip}\noindent\textit{#1}}
\newcommand{\secref}[1]{Section~\ref{#1}}
\newcommand{\probref}[1]{Optimization Problem~\ref{#1}}

\usepackage{algorithm}
\usepackage{algorithmicx}
\usepackage{algpseudocode}

\usepackage[export]{adjustbox}

\IEEEoverridecommandlockouts\IEEEpubid{\makebox[\columnwidth]{978-1-5386-4725-7/18/\$31.00 \textsuperscript{\textcopyright} 2018 IEEE \hfill} \hspace{\columnsep}\makebox[\columnwidth]{ }}
\begin{document}

\title{Show me the Cache: Optimizing Cache-Friendly Recommendations for Sequential Content Access} 

\author{
\large{
Theodoros Giannakas\textsuperscript{1}, 
Pavlos Sermpezis\textsuperscript{2},        
and Thrasyvoulos Spyropoulos\textsuperscript{1}
}\\
\normalsize
\textsuperscript{1}~EURECOM,~France, first.last@eurecom.fr\\
\textsuperscript{2}~FORTH,~Greece, sermpezis@ics.forth.gr 
}
	
\maketitle

\begin{abstract}
Caching has been successfully applied in wired networks, in the context of Content Distribution Networks (CDNs), and is quickly gaining ground for wireless systems. Storing popular content at the edge of the network (e.g. at small cells) is seen as a ``win-win'' for both the user (reduced access latency) and the operator (reduced load on the transport network and core servers). Nevertheless, the much smaller size of such edge caches, and the volatility of user preferences suggest that standard caching methods do not suffice in this context. What is more, simple popularity-based models  commonly used (e.g. IRM) are becoming outdated, as users often consume multiple contents in sequence (e.g. YouTube, Spotify), and this consumption is driven by recommendation systems. The latter presents a great opportunity to bias the recommender to minimize content access cost (e.g. maximizing cache hit rates). To this end, in this paper we first propose a Markovian model for recommendation-driven user requests. We then formulate the problem of biasing the recommendation algorithm to minimize access cost, while maintaining acceptable recommendation quality. We show that the problem is non-convex, and propose an iterative ADMM-based algorithm that outperforms existing schemes, and shows significant potential for performance improvement on real content datasets.

\end{abstract}

\section{Introduction}
\label{sec:intro}
At the dawn of the era of 5G, we witness dramatic changes in \textit{network architecture} and \textit{user traffic demand}, two main elements that shape the ecosystem of modern and future mobile networks. 
New architectures for cellular networks, comprising densification of the access network (small-cells, SCs) and integration of computing and caching capabilities in base stations (mobile edge caching \& computing, MEC), have been proposed~\cite{bastug2014living}, to cope with the recent boom in traffic demand~\cite{networking2016forecast} and the envisioned increase in devices/traffic density in the near future ($\times10k$ more traffic and $\times10-100$ more devices~\cite{5GnumbersWashingtonPost}).
At the same time, the higher data rates offered by mobile operators, lead to changes in the consuming habits of users as well. Users nowadays consume more video and music traffic through their mobile devices (video will be $82\%$ of the total traffic by 2021~\cite{networking2016forecast}). 



Video (and music) traffic is highly skewed~\cite{youtube-traffic-from-edge,top-video-cellular,pptv-mobile-vod}, with a relatively small number of popular videos creating a large part of the demand. This skewness favors caching efficiency, and has made Content Distribution Networks (CDNs) a major component of today's wired Internet. Nevertheless, these ideas are not immediately applicable to future wireless networks. Even putting aside the interference-prone nature of the wireless channel, several factors limit the gains from mobile edge caching, in practice: (i) The size of content catalogues (even of only the most popular contents) is typically much larger than the storage capacity of a small base station. As a result, a large number of requests will always have to be redirected through the backhaul no matter what caching policy is used~\cite{Paschos-infocom2016, ElayoubiRoberts15}. (ii) The smaller coverage area of SCs implies frequent changes in the set of associated users in a SC, and this introduces larger variation in content request patterns. Thus, it becomes more difficult for a caching algorithm to accurately predict the ``local content popularity''~\cite{Paschos-infocom2016}. (iii) Viral videos routinely receive hundreds of thousands of views within a few hours after their upload~\cite{wikipedia-viral-videos} (and usually fade in equally fast). To accommodate for such trending videos, frequent updates of cached content is needed, which imposes a heavy load for the backhaul.

A large number of \textit{cooperative caching} policies have been recently proposed aiming to improve wireless caching efficiency, by exploiting coverage overlaps of small-cells~\cite{femto,poularakis2014toc}, coded transmissions~\cite{Ali2014}, caching on user equipment~\cite{Hui-offloading,sermpezis2014,Pavlos-Offload2016} and vehicles~\cite{Whitbeck-offloading,vigneri2016}, and others~\cite{collaborative-multi-bitrate-caching}. While these policies do offer considerable benefits \textit{in theory}, and under certain conditions some practical gains as well, they also face the above obstacles (as well as some additional ones, e.g., subpacketization complexity~\cite{shanmugam2016finite}, limited resources in the case of device-side caching, etc.) 

The above discussion suggests that caching policy alone is limited in the amount of performance gain it can bring at a small edge cache. Increasing cache capacity (to improve hit rates) or backhaul capacity (to allow for more frequent updates) seem like the only way to cope with this problem, but these are ``hardware'' solutions involving significant CAPEX/OPEX costs, when considering the very large number of small base stations envisioned in future heterogeneous and ultra-dense networks. The following question then arises: 
\textit{Are there any practical ``software-based'' solutions that can improve caching efficiency, at a low cost?}

Our answer to this question follows from two key observations: (i) the performance of a caching algorithm is dependent on user request patterns; (ii) user requests are increasingly driven by recommendation algorithms~\cite{cheng2009nettube,gupta2017modeling,RecImpact-IMC10}. For example, Netflix reports that around 80\% of its video views are through recommendations~\cite{gomez2016netflix}, while the corresponding percentage for YouTube's related video is 50\%~\cite{RecImpact-IMC10}. Our proposal is therefore to not try to further improve what is stored at each cache, but rather \textit{to better exploit the already cached content} by taking advantage of the recommendation algorithms, integrated in many of the content services that dominate the traffic consumption (e.g., YouTube, Netflix, Spotify). For instance, upon peak hours, when the backhaul is congested, a recommendation system could put higher preference on recommending content pre-cached in the vicinity of a user (e.g., at the base station with which she is associated, or to a nearby cache). 

However, recommendations cannot be arbitrarily manipulated in practice. The recommendation system still needs to accomplish its primary goal, i.e., recommend to users contents of their interest. Hence, the goal of this paper is to \textit{leverage the recommendation system to increase the caching efficiency, while at the same time ensuring high quality recommendations}. Note also that, recommending a content that is almost as interesting to the user but locally cached might not just be ``acceptable to the user, better for the network'', but even \textit{beneficial to both the user and the network}, if that content can be streamed for example at better quality (since congested links are avoided). 

The idea of improving edge caching by taking into account recommendations has been recently proposed~\cite{sermpezis-sch-globecom,sch-chants-2016,chatzieleftheriou2017caching,liu2018learning}, by optimizing caching policies based on~\cite{sermpezis-sch-globecom,sch-chants-2016} or jointly with recommendation systems~\cite{chatzieleftheriou2017caching,liu2018learning}. These solutions require control over the caching policy (e.g., if content and network provider are the same entities or collaborate), which might not be feasible for every scenario. What is more, these works assume simple models of content access, such as the Independent Reference Model (IRM), where consecutive requests by a user are independent.
In this paper, we go beyond the state of the art, by making the following contributions:
\begin{itemize}[leftmargin=*]
\item To our best knowledge, we propose the first model for recommendation-driven sequential user requests%
, that better fits real users behavior in a number of popular applications (e.g. YouTube, Vimeo, personalized radio). We then formulate the problem of maximizing the impact of recommendations on cache performance, while maintaining high quality for related contents (Sections~\ref{sec:problem-setup} and~\ref{sec:optimization-problem}). 
\item We show that the problem is non-convex, and propose an ADMM-like iterative algorithm that fully exploits the structure of the sequential content access statistics (\secref{sec:optim_algorithm}). 
\item Using simulations on a number of real datasets for different content types, we show that our algorithm significantly outperforms baseline approaches and previous work, improving caching performance in a large range of setups, while maintaining the desired recommendation quality (\secref{sec:sims}).
\end{itemize}

As a final note, while in this paper we consider the problem in the simple(r) context of a layer of caches with increasing cost, the proposed approach can be applied on top of a number of the cooperative wireless caching policies discussed earlier, e.g., ~\cite{femto,Ali2014,sermpezis-sch-globecom}. We plan to explore the interplay of such advanced caching policies and the proposed cache-friendly recommender as part of future work.

\section{Problem Setup}
\label{sec:problem-setup}

\myitem{Content Traffic.} We consider a content catalogue $\mathcal{K}$ of cardinality $K$, corresponding to a specific application (e.g. YouTube). A user can request a content from this catalogue either by asking \textit{directly} for the specific content (e.g., in a search bar) or by following a \textit{recommendation} of the provider. In practice, users spend on average a long time using such applications, e.g., viewing several related videos (e.g., 40~min. at YouTube~\cite{businessYoutubeSessions}), or listening to personalized radio while travelling.
%

\myitem{Recommendation System.} Recommendation systems have been a prolific area of research in the past years, and often combine content features, user preferences, and context with one or more sophisticated methods to predict user-item scores, such as collaborative filtering~\cite{sarwar2001item}, matrix factorization~\cite{koren2009matrix}, deep neural networks~\cite{covington2016deep}, etc. We will assume for simplicity that the \textit{baseline} recommender system (RS) for applications where the user consumes multiple contents works as follows: 

(i) The RS calculates a similarity score $u_{ij}$ between every content $i,j \in \mathcal{K}$, based on some state-of-the-art method; this defines a similarity matrix $U\in \mathbb{R}^{K \times K}$. Without loss of generality, let $u_{ij}\in[0,1]$, where we normalize values so that $u_{ij}=0$ denotes unrelated contents and $u_{ij}\rightarrow1$ ``very related contents''. W.l.o.g. we set $u_{ii}=0, \forall i \in \mathcal{K}$ for all contents. Note also that this $U$ might differ per user. 

(ii) After a user has just watched content $i$, the RS recommends the $N$ contents with the highest $u_{ij}$ value~\cite{RecImpact-IMC10,covington2016deep}. $N$ is usually a small number (e.g. values of $3-5$ are typical for the default YouTube mobile app) or sometimes $N=1$, as in the case of personalized radio (Spotify, last.fm) or ``AutoPlay'' feature in YouTube where the next content is simply sent to the user automatically by the recommender. 


\myitem{Caching Cost.} We assume that fetching content $i$ is associated with a cost $x_{i}\in\mathbb{R}$, which is known to the content provider. This cost might correspond to the delay experienced by the user, the added load in the backhaul network, or even monetary cost (e.g. for an Over-The-Top content provider leasing the infrastructure). It can also be used to capture different caching topologies. For example, to simply maximize the cache hit rate, we could set $x_i = 0$ for cached content, and $x_i =1$ for non-cached. For hierarchical caching~\cite{poularakis2014toc,borst2010}, the cost increases if the content is cached deeper inside the network.

Finally, as mentioned earlier, the specific wireless setup is relatively orthogonal to our approach and beyond the scope of this work. However, as a simple example, consider the well-known femto-caching setup~\cite{femto}. The proposed algorithm there would first decide what will be cached at each base station. Then, $x_{i}$ would have a low value for all content that the user in question can fetch from some BS in range (possibly dependent on the SINR of the BS, as well~\cite{femto}), and a high value otherwise.  

\myitem{User Request Model.} Based on the above setup, we assume the following content request model. 
\begin{definition}[User Request Model]\label{def:user-request}
After a user has consumed a content $i$, then
\begin{itemize}
\item (\textit{recommended request}) with probability $a$ the user picks one of the $N$ recommended contents with equal probability $\frac{1}{N}$.
\item (\textit{direct request}) with probability $1-a$ it ignores the recommender, and picks any content $j$ from the catalogue with probability $p_j$,  where $p_{j}\in [0,1]$ and $\sum_{j=1}^{K} p_{j}=1$. 
\end{itemize}
\end{definition}
$p_j$ above represents an underlying (long-term) popularity of content $j$, over the entire content catalogue. For short, we denote the vector $\mathbf{p_0}=[p_1,\dots,p_K]^{T}$. Note that the above model can easily generalized to consider different probabilities to follow different recommended contents (e.g. based on their ranking on the recommended list). Note also the assumption that $a$ is fixed: for instance,
for applications where the user cannot evaluate the content quality before she actually consumes the content, this assumption is realistic, at least in the ``short term''. In the remainder of the paper, we assume that if the recommendation quality is above a threshold, then the user's trust in the recommender (i.e. the value of $a$) remains fixed. We plan to explore scenarios where $a$ changes at every step, as a function of recommendation quality, in future work.

\myitem{Recommendation Control.} Our goal is to modify the user's choices through the ``recommended request'' part above, by appropriately selecting the $N$ recommended items. Specifically, let an indicator variable $z_{ij}$ denote whether content $j$ is in the list of $N$ recommended contents, after the user has watched content $i$. If $z_{ij} \in \{0,1\}$, the problem would be combinatorial and in most cases NP-hard. We can relax this assumption by letting $z_{ij} \in [0,1]$, and $\sum_{j} z_{ij} = N, \forall i$. $z_{ij}$ can be interpreted now as a probability. For example, if  $z_{13} =0.5$, then content 3 will be recommended half the times after the user consumes content 1. To facilitate our analysis we can further normalize this by defining variables $y_{ij} = \frac{z_{ij}}{N}$, $y_{ij}\in[0,\frac{1}{N}]$. It is easy to see that $y_{ij}$ define a stochastic matrix $Y$. Putting everything together, the above user request model can be defined by a Markov chain, whose transition matrix $P$ is given by
\begin{equation} \label{eq:Google-Markov-Matrix}
P = a\cdot Y+ (1-a)\cdot P_0,
\end{equation}
where $P_0 = \mathbf{1} \cdot \mathbf{p_{0}^T}$ is a rank-1 matrix ($P_0 \in \mathbb{R}^{K\times K}$), equal to the \textit{outer product} of a vector with $K$ unit values and the direct request vector $\mathbf{p_0}$. The above model of content requests, and the corresponding Markov Chain, is reminiscent of the well-known ``PageRank model''~\cite{page1999pagerank}, where a web surfer either visits an arbitrary webpage $i$ (with a probability $p_{i}$) or is directed to a webpage $j$ through a link from a webpage $i$ (with probability $p_{ij}$).

Table~\ref{table:notation} summarizes some important notation.

\begin{table}[h]
\centering
\caption{\textsc{Important Notation}}\label{table:notation}
\begin{small}
\begin{tabular}{|l|l|}
\hline
{$\mathcal{K}$}		&{Content catalogue (of cardinality $K$)}\\
\hline
{$u_{ij}$}			&{Similarity score for content pair $\{i,j\}$}\\
\hline
{$N$}			&{Number of recommended contents after a viewing}\\
\hline
{$x_{i}$}			&{Cost for fetching content $i$}\\
\hline
{$a$}			&{Prob. the user requests a recommended content}\\
\hline
{$p_{j}$}			&{Average a priori popularity of content $j$}\\
\hline
{$\boldsymbol{p_0}$}&{A priori popularity distribution of contents, $\in \mathbb{R}^K$}\\
\hline
{$z_{ij}$}			&{Prob. the RS recommends content $i$ after viewing $j$}\\
\hline
{$y_{ij}$}			&{Normalized prob. $y_{ij} = \frac{z_{ij}}{N}$}\\
\hline
{$\boldsymbol{\pi}$}	&{Stationary distribution of contents, $\in \mathbb{R}^K$}\\
\hline
{$\mathcal{C}$}		&{Set of cached content (of cardinality $C$)}\\
\hline
\end{tabular}
\end{small}
\end{table}




\section{Problem Formulation}\label{sec:optimization-problem}
Given the above setup, our general goal in this paper is \textit{to reduce the total cost of serving user requests by choosing matrix $Y$, while maintaining a required recommendation quality}.

Consider a user that starts a session by requesting a content $i\in\mathcal{K}$ with probability $p_{i}$ (i.e., we assume her initial choice is not affected by the recommender), and then proceeds to request a sequence of contents according to the Markov chain $P$ of Eq.(\ref{eq:Google-Markov-Matrix}). Assume that the user requests $M$ contents in sequence. Then the associated access cost would be given by
\begin{equation}\label{cost-finite-steps}
\sum_{m=0}^{M} \mathbf{p_0}^{T}\cdot P^{m} \cdot \mathbf{x},
\end{equation}
where $\mathbf{x} = [x_{1},...,x_{K}]^{T}$ is the vector of the costs per content (see \secref{sec:problem-setup}).

$M$ is a random variable though, and the various powers of transition matrix $P$, which contains the control variable $Y$, would greatly complicate the problem. However, the above Markov chain is strongly connected and ergodic under very mild assumptions for $\mathbf{p_0}$. It thus has a stationary distribution $\boldsymbol{\pi} = [\pi_{1},...,\pi_{K}]^{T}$, which is also equal to the long-term percentage of total requests for content $i$. Consequently, for $M$ large enough we can approximate the average cost \textit{per request} with
\begin{equation}\label{eq:definition-average-cost}
\boldsymbol{\pi}^{T}\cdot \mathbf{x}
\end{equation}
where $\boldsymbol{\pi}$ can be calculated from the following lemma.

\begin{lemma}\label{lemma:stationary-analytical-expression}
The stationary distribution $\boldsymbol{\pi}$ is given by
\begin{equation}\label{eq:stationary-analytical-expression}
\boldsymbol{\pi}^{T} = (1-a)\cdot  \mathbf{p_0}^{T} \cdot (I - a\cdot Y)^{-1} 
\end{equation}
where $I$ the $K\times K$ identity matrix.
\end{lemma}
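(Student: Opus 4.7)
The plan is to start from the defining equation of the stationary distribution, $\boldsymbol{\pi}^{T} = \boldsymbol{\pi}^{T} P$, substitute the expression for $P$ from \eq{eq:Google-Markov-Matrix}, and then solve algebraically for $\boldsymbol{\pi}^{T}$. The key simplification comes from the rank-1 structure of $P_0 = \mathbf{1}\cdot\mathbf{p_0}^{T}$: because $\boldsymbol{\pi}$ is a probability distribution, $\boldsymbol{\pi}^{T}\mathbf{1} = 1$, so the term $\boldsymbol{\pi}^{T}P_{0}$ collapses to the constant vector $\mathbf{p_0}^{T}$, independent of $\boldsymbol{\pi}$.

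Concretely, I would proceed as follows. First, I would write
\begin{equation*}
\boldsymbol{\pi}^{T} \;=\; \boldsymbol{\pi}^{T}\bigl(a\, Y + (1-a)\, \mathbf{1}\mathbf{p_0}^{T}\bigr) \;=\; a\,\boldsymbol{\pi}^{T} Y + (1-a)\,\mathbf{p_0}^{T},
\end{equation*}
using $\boldsymbol{\pi}^{T}\mathbf{1}=1$ in the last step. Rearranging gives $\boldsymbol{\pi}^{T}(I - aY) = (1-a)\,\mathbf{p_0}^{T}$, and inverting yields the claimed formula.

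The only non-routine step is justifying that $(I - aY)$ is invertible. Since $Y$ is (row-)stochastic by construction ($y_{ij}\in[0,1/N]$ with $\sum_{j} y_{ij} = 1$), its spectral radius is $1$, hence $aY$ has spectral radius $a<1$. Thus $I - aY$ is nonsingular, and in fact admits the Neumann expansion $(I-aY)^{-1} = \sum_{k\ge 0} (aY)^{k}$, which also makes clear that all entries of $(I-aY)^{-1}$ are nonnegative. As a sanity check, since $Y\mathbf{1}=\mathbf{1}$, the Neumann series gives $(I-aY)^{-1}\mathbf{1} = \tfrac{1}{1-a}\mathbf{1}$, so the right-hand side of \eq{eq:stationary-analytical-expression} indeed sums to $(1-a)\cdot\mathbf{p_0}^{T}\cdot\tfrac{1}{1-a}\mathbf{1} = 1$, confirming it is a valid probability distribution.

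I would also briefly note that existence and uniqueness of $\boldsymbol{\pi}$ are guaranteed by the structure of $P$: the additive $(1-a)\mathbf{1}\mathbf{p_0}^{T}$ term ensures that every state with $p_{j}>0$ is reachable in one step from any state, making the chain irreducible and aperiodic (the standard PageRank teleportation argument). This justifies treating $\boldsymbol{\pi}$ as the unique solution of $\boldsymbol{\pi}^{T}=\boldsymbol{\pi}^{T}P$ with $\boldsymbol{\pi}^{T}\mathbf{1}=1$, which is the starting point of the derivation.
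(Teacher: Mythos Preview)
Your proposal is correct and follows essentially the same route as the paper: start from the stationary equation $\boldsymbol{\pi}^{T} = a\,\boldsymbol{\pi}^{T} Y + (1-a)\,\mathbf{p_0}^{T}$ and invert $(I-aY)$, with the paper justifying invertibility by noting that $(I-aY)$ has eigenvalues of strictly positive modulus and citing the PageRank literature. Your version is simply more explicit, spelling out the use of $\boldsymbol{\pi}^{T}\mathbf{1}=1$, the spectral-radius argument for invertibility, and the sanity check that the result is a probability vector.
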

\begin{proof}
The stationary distribution above can be derived through the standard stationary equality~\cite{mor2013}
\begin{equation}
\boldsymbol{\pi}^{T} =a \cdot \boldsymbol{\pi}^{T} \cdot Y + (1-a) \cdot  \mathbf{p_0}^{T}, 
\end{equation}
by observing that matrix $(I - a\cdot Y)$ has strictly positive eigenvalues (in measure). See also~\cite{avrachenkov2006pagerank}, for more details.
\end{proof}

We are therefore ready to formulate cache-friendly recommendations as an optimization problem.

\begin{problem}[Cache-Friendly Recommendations]\label{problem:BASIS-optimization-problem}
\begin{small}
\begin{subequations}\label{eq:objective-infinite-step}
\begin{align}
\underset{Y}{\mbox{minimize}} \; \; \; \; & 
  \mathbf{p_{0}}^{T} \cdot (I-aY)^{-1} \cdot \mathbf{x}, \tag{\ref{eq:objective-infinite-step}}\\
& 0 \leq y_{ij} \leq \frac{1}{N}, ~~\forall~i~and~j~\in~\mathcal{K}. \label{eq:y-box-constraint} \\
& \sum_{j =1}^{K}   y_{ij} = 1, ~~\forall i~\in \mathcal{K} \label{eq:sum-y-equals-1-constraint}\\
& y_{ii} = 0,~~\forall~i~\in \mathcal{K} \label{eq:no-self-recommendations-constraint}\\
& \sum_{j =1}^{K}   y_{ij} u_{ij} \geq q_i, ~~\forall i~\in \mathcal{K} \label{eq:quality-constraint}
\end{align}
\end{subequations}
\end{small}
\end{problem}

\myitemit{Objective.} The objective is to minimize the expected cost to access any content, and follows directly from \eq{eq:definition-average-cost} and Lemma~\ref{lemma:stationary-analytical-expression}. Note that we have dropped the constant $(1-a)$ from \eq{eq:stationary-analytical-expression}, as it does not affect the optimal solution. 

\myitemit{Control Variables.} The variables $y_{ij}$ ($K^{2}$ in total), deciding what is recommended after each content $i$, constitute the control variables. 

\myitemit{Constraints.} The first three constraints make sure that $y_{ij}$ forms a stochastic transition matrix that can be translated to $N$ recommendations per item $i$. Specifically, the ``box'' constraints of \eq{eq:y-box-constraint} ensures that all entries are positive, and smaller or equal to $1/N$. The latter is necessary as matrix $Y$ is a \textit{normalized} version of the recommendations. Together with \eq{eq:sum-y-equals-1-constraint} these ensure that exactly $N$ contents are recommended for every $i$ (see also Section~\ref{sec:problem-setup}, ``Recommendation Control''). \eq{eq:no-self-recommendations-constraint} simply ensures that the same content cannot be recommended when it was just consumed.

\myitemit{Quality Constraint.} \eq{eq:quality-constraint} ensures that that the ``quality'' of recommended contents for each $i$ is above a desired threshold. Observe that, without this constraint, the optimal solution to the above problem is trivial, namely to always recommend the same $N$ contents $j$ with the minimum cost $x_{j}$. However, these contents will probably be unrelated (i.e. $u_{ij} \rightarrow 0$) essentially ``breaking'' the recommender. Hence, this constraint forces variables $y_{ij}$ to select high $u_{ij}$ values to ensure the recommender keeps doing its primary job, namely finding related contents. Note that, if there are at least $N$ strongly related contents for each $i$ (i.e., $u_{ij} = 1$), then the maximum value for $q_i$ is $1$. W.l.o.g., in the remainder we will assume the same quality constraint for all $i$ ($q_i = q$). 

The remaining quantities are constants, and inputs to the problem. While some of them might still vary over time (e.g., $u_{ij}$), we assume this occurs at a larger time scale, compared to our problem.


%
\oldtext{
The objective function~(\ref{eq:objective-infinite-step}) unless $Y \in \mathcal{S}^{K \times K}$ and $I - aY$ positive semidefinite, is non convex in the control variable Y. Moreover, the solution space defined by~\ref{eq:constraint-infinite-step} consists of three kinds of constraints, $K$ linear inequalities, $K$ linear equality and a set of $3K$ box constraints imposed on the elements of $Y$, hence, since the space is a combination of affine and box constraints, forms a convex polytope. These constraints embody the practical aspects of the problem, namely assuring that each set of content recommendations provides a minimum quality $q$, while the linear matrix equality specifies that the control variable will be  a row stochastic Markov matrix. Regarding the box constraints, the zero diagonal is present as to not allow self-recommendations, whereas the last constraint guarantees that the system offers at least $N$ contents in a probabilistic way. The problem has affine constraints, box constraints and a nonconvex objective, thus is a nonconvex optimization problem.
}

%

Unfortunately, the objective function~(\eq{eq:objective-infinite-step}) is non-convex, unless $Y$ is positive semidefinite and symmetric. In that case, the problem could be cast into an SDP (Semi-Definite Program) using Schur's complement~\cite{boyd2004convex}. However, forcing $Y$ to be symmetric in our problem leads to trivial solutions, as every symmetric Markov chain has a uniform invariant measure. What is more, the inverse in the objective further complicates solving this problem, as the gradient of this expression is rather complex. 

\section{Optimization Algorithm}
\label{sec:optim_algorithm}
Given that \probref{problem:BASIS-optimization-problem} is non-convex, there are no polynomial-time algorithms that can guarantee to converge to the optimal solution. This leaves us with two options for solving the problem: to apply (i) an exponential-time ``global'' optimization algorithm (e.g., Branch-and-Bound), or (ii) a heuristic algorithm for an approximate solution. The former is infeasible for all practical scenarios, due to the large problem size ($K^{2}$ control variables). Therefore, we will consider two heuristic approaches: in \secref{sec:greedy-algorithm} we consider a ``myopic'' algorithm, essentially a greedy approach that solves a simpler objective than \probref{problem:BASIS-optimization-problem}; this algorithm will be our baseline, as it resembles some recent state-of-the-art~\cite{chatzieleftheriou2017caching}); in \secref{sec:admm-algorithm}, we propose a more sophisticated algorithm, inspired from ADMM type of schemes~\cite{boyd2011distributed}.

\subsection{Myopic Algorithm}\label{sec:greedy-algorithm}

The non-convexity of \probref{problem:BASIS-optimization-problem} is due to the expression of the stationary distribution $\boldsymbol{\pi}$ that appears in the objective function. As mentioned, the stationary distribution captures the long-term behavior of a system where users sequentially consume many contents. To simplify the objective, one can could consider a coarse approximation where the recommendation impact is there, but the algorithm ``greedily'' optimizes the access cost \textit{only for the next content access}. In other words, it is as if  a user initially requests a content $i$, then requests another content $j$ (recommended or not), and then leaves the system. In this case, the objective becomes
\begin{equation}
(\mathbf{p_{0}}^{T}\cdot P) \cdot \mathbf{x},
\end{equation}
where the first term of \eq{cost-finite-steps} is dropped (because it is independent of the control variables), and we keep only the second term. This gives rise to the following optimization problem.

\begin{problem}[Myopic Cache-Friendly Recommendations]\label{problem:single-step-optimization}
\begin{small}
\begin{align}
\underset{Y}{\mbox{minimize}}~~ \; & 
\mathbf{p_{0}}^{T} \cdot (a\cdot Y + (1-a)\cdot P_0) \cdot x, \label{eq:objective-single-step}\\
s.t.~~~ &\text{Eqs. (\ref{eq:y-box-constraint})--(\ref{eq:quality-constraint})} \nonumber 
\label{eq:constraint-set-single-step}
\end{align}
\end{small}
\end{problem}

In the above problem, the constraints remain intact as the Problem~\ref{problem:BASIS-optimization-problem}. However, now the objective is linear in $Y$. This is an Linear Problem (LP) with affine and box constraints, which can be solved efficiently in polynomial time, using e.g. interior-point methods~\cite{boyd2004convex}. 

\myitemit{Remark.} The single-step approach can be interpreted as a projection of the recent work of~\cite{chatzieleftheriou2017caching} to our framework. Specifically, the authors solve a similar ``single-step'' problem, jointly optimizing the caching and recommendation policy (which is formulated as a Knapsack problem). Omitting the caching decisions of~\cite{chatzieleftheriou2017caching}, for the recommendations the authors solve a similar problem to \probref{problem:single-step-optimization}.


\oldtext{
The problem constraints remain intact as the Problem~\ref{problem:BASIS-optimization-problem}, however now the objective is linear in $Y$, therefore, we have in hand an LP with affine and box constraints, thus trivially solved. 
Essentially, the work described in~\cite{chatzieleftheriou2017caching}, and more specifically the part of recommendations, if projected onto our framework, can be parallelized using the above problem.
}

\subsection{Cache-Aware Recommendations for Sequential content access (CARS)}\label{sec:admm-algorithm}

The above ``myopic'' approach does not exploit the full structure of the Markov chain $P$. For example, assume there are two contents $A$ and $B$ that are both cached and both have high similarity with a content currently consumed, but $B$ has slightly higher similarity. The Myopic scheme will choose to recommend $B$. However, assume that $A$ is similar to many contents that happen to be cached, while $B$ does not. This suggests that, if $B$ is recommended, then in the next step there will be very few good options (hence the algorithm's name): the myopic algorithm will either have to recommend cached contents with low quality or high quality contents which lead to cache misses. To be able to foresee such situations and take the right decisions, we need to go back to the objective of \probref{problem:BASIS-optimization-problem}. 

To circumvent the problem of having the inverse of the control matrix in the objective, we formulate an \textit{equivalent} optimization problem by introducing the stationary vector $\boldsymbol{\pi}$ as an explicit (``auxiliary'') control variable.

\begin{problem}[Cache-Friendly Recommendations: Equivalent Problem]\label{problem:infinite-step-admm}
\begin{small}
\begin{subequations}\label{eq:equivalent-objective-infinite-step}
\begin{align}
\underset{\boldsymbol{\pi},Y}{\mbox{minimize}}~~ \; & 
 \boldsymbol{\pi}^{T}\cdot \mathbf{x}, \tag{\ref{eq:equivalent-objective-infinite-step}}\\
s.t.~~~ &\text{Eqs. (\ref{eq:y-box-constraint})--(\ref{eq:quality-constraint})} \nonumber \\
& \boldsymbol{\pi}^{T} = \boldsymbol{\pi}^{T} \cdot (a\cdot Y+(1-a)\cdot \mathbf{p_{0}}^{T})\label{eq:stationarity-hard-con}\\
& \sum_{j =1}^{K} \pi_{j} = 1 \label{eq:stationarity-sum-pi}\\
& \pi_{j} \geq 0,~~\forall~j~\in \mathcal{K}. \label{eq:stationarity-positive-pi}
\end{align}
\end{subequations}
\end{small}
\end{problem}


\probref{problem:infinite-step-admm} constraints three new (sets of) constrains. \eq{eq:stationarity-sum-pi} and \eq{eq:stationarity-positive-pi} simply ensure that $\boldsymbol{\pi}$ is a probability distribution. \eq{eq:stationarity-hard-con} is an important constraint that ensures that the two problems are equivalent, by forcing  $\boldsymbol{\pi}$ to be a stationary distribution related to the transition matrix $P = a\cdot Y+(1-a)\cdot P_{0}$. It is easy to see that the two problems have the same set of optimal solutions.

The objective function is now linear in the control variables $\boldsymbol{\pi}$. However, constraint \eq{eq:stationarity-hard-con} is a quadratic equality constraint, and thus the problem remains non-convex. Nevertheless, observe that the problem is now \textit{bi-convex} in the variables $Y$ and $\boldsymbol{\pi}$. Bi-convex problems can often be efficiently tackled with Alternating Convex Search (ACS) methods, that iteratively solve the convex sub-problems for each set of control variables. Unfortunately, such approaches fail here, as the $Y$ subproblem is simply a feasibility problem ($Y$ does not appear in the objective), and ACS would not converge (our implementation confirms this observation). What is more, having the quadratic equality constraint as a hard constraint does not facilitate such an iterative solution.

\oldtext{
We are interested in investigating the inherent properties of \probref{}. As mentioned earlier, the feasibility region formed by~\ref{} is essentially a convex set as it consists of affine and box constraints. Yet further analysis is  necessary for the objective~\ref{} which is not a well known form and therefore not easily characterized. We have:
\begin{align}
\underset{Y}{\mbox{minimize}} \; & 
 - (1-a)\cdot \mathbf{p_{0}} \cdot (I-aY)^{-1} \cdot \mathbf{x}^{T}\\
 s.t. \quad & Y \in \mathcal{A}
 \end{align}
and as $\mathcal{A}$ we call the solution space defined by~\ref{}. 
We now introduce a new set of variables $\{\pi\}$ along with $K$ equality constraints as to form an equivalent problem. Using the Lemma~\ref{} the problem is formed as follows
\begin{align}
\underset{\pi,Y}{\mbox{minimize}} \; & 
 - \pi^{T} \cdot \mathbf{x}\\
 s.t. \quad & Y \in \mathcal{A},\\
 & \pi \in \mathcal{B},\\
 & \pi^T = \pi^T(aY+(1-a)P_0)\label{quadratic_equal}
 \end{align}
where \eq{quadratic_equal}, is a vector equality of $K$ quadratic equalities. Therefore we ended up with an equivalent form for Problem~\ref{} that consists of a linear objective, the convex solution space $\mathcal{A}$ for variable $Y$, the probability simplex $\mathcal{B}$ for variable $\pi$ and $K$ quadratic equalities that couple the two sets of variables. As it is well known, a problem with quadratic equality constraints is nonconvex and according to~\cite{d2003relaxations} hard. Hence, due to equivalence, the initial Problem~\ref{} is also non convex and hard.
}
 
Instead, we propose to use a Lagrangian relaxation for that constraint, moving it to the objective. To ensure the strong convexity of the new objective, we form the \textit{Augmented Lagrangian}~\cite{boyd2011distributed}. Let us first define the function $c(\boldsymbol{\pi},Y)$ as
\begin{equation}
c(\boldsymbol{\pi},Y) = \boldsymbol{\pi}^T - \boldsymbol{\pi}^T \cdot (a\cdot Y - (1-a)\cdot P_{0})
\end{equation}
so that the constraints of \eq{eq:stationarity-hard-con} can be written as
\begin{equation}
c(\boldsymbol{\pi},Y) = 0
\end{equation}

The augmented Lagrangian is then given by:
\begin{equation}\label{eq:equivalent-objective-infinite-step-AL}
f_{\rho}(\boldsymbol{\pi},Y) = \boldsymbol{\pi}^{T}\cdot x  + c(\boldsymbol{\pi},Y) \cdot \boldsymbol{\lambda}+ \frac{\rho}{2}\cdot (||c(\boldsymbol{\pi},Y)||_{2})^{2}
\end{equation}
where $\boldsymbol{\lambda}$ is the column vector of length $K$ of the Lagrangian multipliers (one multiplier per quadratic equality), $\rho$ a positive constant scalar, and $||\cdot ||_{2}$ the euclidean norm. This objective is still subject to the remaining constraints of \probref{problem:infinite-step-admm}, all of which are now affine. What is more, the problem remains bi-convex in the control variables $Y$ and $\boldsymbol{\pi}$. We can thus apply an ADMM-like method, where we iteratively solve the convex subproblems with respect to $Y$ and $\boldsymbol{\pi}$, but now with the above augmented objective, so that when $c(\boldsymbol{\pi},Y)$ diverges a lot from $0$, the subproblem solutions in the inner loop are penalized. We also update the Lagrangian multipliers $\lambda_i$ at each iteration. Our detailed algorithm is described in Algorithm~\ref{alg:ADMM}.


\begin{algorithm}
\begin{algorithmic} [1]
\caption{CARS (Cache-Aware Recommendations for Sequential content access)}\label{alg:ADMM}
\Statex {$Input: N, U, q, \mathbf{x}, a, \mathbf{p_{0}}$} \Comment (system parameters)
\Statex {$Input: Acc_1, Acc_2, maxIter, \rho, \lambda_0, Y_0$} \Statex\Comment (algorithm tuning parameters)
\State {$i \gets 1$}
\State $COST_0 \gets \infty$
\State $V \gets True$
\While{$V$} 
\State $\boldsymbol{\pi}_i = \underset{\boldsymbol{\pi}\in C_{\pi}}{\operatorname{argmin}} \{f_{\rho}(\boldsymbol{\pi},Y_{i-1}) \}$
\State $Y_i = \underset{Y\in C_{Y}}{\operatorname{argmin}} \{f_{\rho}(\boldsymbol{\pi}_{i},Y) \}$
\State $\lambda \gets \lambda + (\frac{\rho}{2})\cdot c(\boldsymbol{\pi}_{i},Y_{i})$
\State $COST_i \gets (1-a)\cdot \mathbf{p_{0}^{T}} \cdot (I-a\cdot Y_{i})^{-1} \cdot \mathbf{x}$
\State $\epsilon_1 \gets (|c(\pi_i,Y_i)|_{2})^{2}$
\State $\epsilon_2 \gets |COST_i - COST_{i-1}|$
\State $V = ((\epsilon_1>Acc_1)\land(\epsilon_2>Acc_2))\lor(i \leq maxIter)$
\State $i \gets i+1$
\EndWhile
\State $j\gets \underset{\ell=1,...,i-1}{\operatorname{argmax}} \{COST_{\ell}\}$
\State $return~~Y_{j}$

\end{algorithmic}
\end{algorithm}

Algorithm~\ref{alg:ADMM} receives as input the system parameters $N, U, q, \mathbf{x}, a, \mathbf{p_{0}}$, and the desired accuracy levels and initialization parameters $Acc_1, Acc_2, maxIter, \rho, \lambda_0, Y_0$. It initializes the objective ($COST_{0}$) to infinity and starts an iteration for solving the convex subproblems (lines 4--13). In the first leg of the loop (line 5), the augmented Lagrangian $f_{\rho}(\boldsymbol{\pi},Y)$ is minimized over $\pi$, considering as constant the variables $Y$ (equal to their prior value). Then, considers the returned value of $\boldsymbol{\pi}$ from line 5 as constant and minimizes the Lagrangian over the variables $Y$. Both minimization sub-problems are convex and can be efficiently solved. The solution space of the sub-problems $C_{Y}$ and $C_{\pi}$ is given by Eqs.~(\ref{eq:y-box-constraint})--(\ref{eq:quality-constraint}) and Eqs.(\ref{eq:stationarity-sum-pi})--(\ref{eq:stationarity-positive-pi}), respectively. After calculating in line 8 the long term $COST$ we get from $Y_i$, the status of the current iteration is computed in the (a) primal residual of the problem (line 9) and (b) the difference of returned $COST$ compared to the previous step (line 10). The algorithm exits the while loop, when the value of the primal residual and improvement in the $COST$ are smaller than the required accuracy, or when the maximum allowable iterations are reached (as described in line 11). 

As a final note, the above problem can also be cast into a non-convex QCQP (quadratically constrained quadratic program). State-of-the-art heuristic methods for approximate solving generic QCQP problems~\cite{park2017general} are unfortunately of too high computational complexity for problems of this size. It is worth mentioning that we transformed the problem to a standard QCQP formulation and we applied methods based on~\cite{park2017general} but the algorithms were only capable of solving small instances of the problem (a few 10s of contents).

\myitem{Convergence of CARS.}
Finally, we investigate the performance of CARS (Algorithm~\ref{alg:ADMM}) as a function of its computational cost, i.e., the maximum number of iterations needed. Fig.~\ref{fig:admm-convergence} shows the achieved \textit{actual cost} (red line, circle markers) at each iteration, and the \textit{virtual cost} (gray line, triangle markers) calculated from the current value of the auxiliary variable $\boldsymbol{\pi}$, in a simulation scenario (see details in Section~\ref{sec:sims}). It can be seen that within $5$ iterations, CARS converges to its maximum achieved cache hit ratio. This is particularly important for cases with large content catalogue sizes that require an online implementation of CARS.

\begin{figure}
    \centering
        \includegraphics[width=0.6\columnwidth]{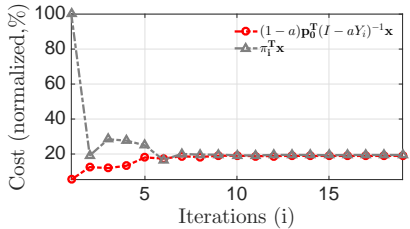}
        \caption{Convergence of CARS.}
        \label{fig:admm-convergence} 
\end{figure}

\oldtext{
We initialize our algorithm by feeding it the desired accuracy levels, our model parameters $N,a,U,x,q,p_0$ and the Augmented Lagrangian parameters. In the first leg of the loop, we minimize $f_{\rho}(\boldsymbol{\pi},Y)$ over $\pi$ and over $Y$ in the second over their respective constraints using cvxpy~\cite{cvxpy}. After calculating the long term $CHR$ we get from $Y_i$, we compute the primal residual of the problem and the difference of returned $CHR$ compared to the previous step. To exit the while loop, we either want to satisfy the accuracy of both tolerances () or to reach the maximum allowable iterations of the algorithm, this entity is captured by the control variable $V$.
}

\oldtext{
We will now attempt to reformulate Problem~\ref{problem:BASIS-optimization-problem}, in order to seek for a new policy, that could possibly exploit even more the Markovian nature of the access model. This approach is based on extending the set of control variables to now include $Y$ (recommendation matrix) and $\pi$ (the stationary distribution).
\begin{problem}[Equivalent Problem]\label{problem:infinite-step-admm}
\begin{small}
\begin{align}
\underset{Y, \pi}{\mbox{minimize}} \; & 
 - \pi^{T} \cdot x, \label{eq:2-variable-objective}\\
s.t. \quad & \sum_{j =1}^{K} y_{ij} u_{ij} \geq q, ~~\forall i~\in \mathcal{F}\\
& Y \cdot \mathbf{1} = \mathbf{1}\\
& 0 \leq y_{ij} \leq \frac{1}{N}, ~~\forall~i~and~j~\in~\mathcal{F}\\
& y_{ii} = 0,~~\forall~i~\in \mathcal{F}\\
& \pi^{T} = \pi^{T} (aY+(1-a)P_0)\label{stationarity-hard-con}\\
& \sum_{j =1}^{K} \pi_{j} = 1\\
& \pi_{j} \geq 0,~~\forall~i~\in \mathcal{F}.
\label{eq:constraint-set-2-variable}
\end{align}
\end{small}
\end{problem}
In principle the Problem~\ref{problem:infinite-step-admm}, affine and box constraints as the Problem~\ref{problem:BASIS-optimization-problem}, but with a crucial addition. We now impose the stationarity condition at the constraints of the problem, and with $Y$ and $\pi$ being control variables, it is evident that as~\ref{stationarity-hard-con} is a vector equality, this constraint is essentially $K$ quadratic equalities, hence nonconvex.

}
\oldtext{
To overcome these challenging set of equalities, $c_i(\pi,Y) = \pi^T - \pi^T(aY-(1-a)P_0) = 0,~\forall i \in \mathcal{K}$, we form the Augmented Lagrangian function:
\begin{small}
\begin{equation}
f_{\rho}(\pi,Y)=f_0(\pi,Y) + \sum\nolimits_{i \in \mathcal{K}}\lambda_{i}c_i(\pi,Y)+ (\rho{}/2) \sum\nolimits_{i \in \mathcal{K}} ||c_i(\pi,Y))||_{2}^{2}
\end{equation}
\end{small}
where $f_0(\pi,Y) = -\pi^T \cdot x$ and $\rho$ a positive constant scalar. 
}
\oldtext{
The problem contains two sets of variables, we may observe that once we fix $\pi$ (or $Y$), the function $f_{\rho}(\pi_{fix},Y)$ (or  $f_{\rho}(\pi,Y_{fix})$ becomes convex in $Y$ (or in $\pi)$. Therefore, we are naturally driven to take advantage of this observation by employing an ADMM-imitating algorithm. The algorithm along with its individual operations is being depicted below:
}

\section{Performance Evaluation}
\label{sec:sims}
In this section, we investigate the improvements in caching performance by the proposed cache-aware recommendation algorithm on top of a preselected caching allocation. We perform simulations using real datasets of related contents (movies and songs), collected from online databases. We first briefly present the datasets (\secref{sec:datasets}) and the simulation setup (\secref{sec:simulation-setup}), and then present simulation results in a wide range of scenarios and parameters and discuss the main findings and implications (\secref{sec:simulation-results})

\subsection{Datasets}\label{sec:datasets}
We collect two datasets that contain ratings about multimedia content. We use this information to build similarity matrices $U$, which are later used in the selection of recommendations, e.g., to satisfy a minimum recommendation quality $q$ (as defined in \secref{sec:optimization-problem}).


\myitem{MovieLens.} We use the $100k$ subset from the \textit{latest Movielens} movies-rating dataset from the MovieLens website~\cite{movielens-related-dataset}, containing $69162$ ratings (from 0.5 to 5 stars) of $671$ users for $9066$ movies. To generate the matrix $U$ of movie similarities from the raw information of user ratings, we apply a standard collaborative filtering method~\cite{survey-collaborative-filtering}. Specifically, we first apply an item-to-item collaborative filtering (using 10 most similar items) to predict the missing user ratings, and then use the cosine-distance ($\in[-1,1]$) of each pair of contents based on their common ratings
\begin{equation*}
sim(i,j) = \frac{\sum_{n=1}^{\#users} r_{n}(i)\cdot r_{n}(j)}{\sqrt{\sum_{n=1}^{\#users} r_{n}^{2}(i)} \cdot \sqrt{\sum_{n=1}^{\#users} r_{n}^{2}(j)}}
\end{equation*}
where we normalized the ratings $r_{i}$, by subtracting from each rating the average rating of that item. We build the matrix $U$ by saturating to values above 0.6 to 1, and zero otherwise, so that $u_{nk}\in \{0,1\}$.

\myitem{Last.fm.} We use the subset of \textit{The Million Song Dataset} from the Last.fm database~\cite{Bertin-Mahieux2011}, containing $10k$ song IDs. The dataset was built based on the method ``getSimilar'', and thus it contains a $K\times K$ matrix with the similarity scores (in [0,1]) between each pair of songs in the dataset, which we use as the matrix $U$. As the Last.fm dataset is quite sparse and we set the non zero values $u_{ij}$ to one to make a binary $U$ in that dataset as well. 

To facilitate simulations, we process both datasets, by removing rows and columns of the respective $U$ matrices with $\sum_{j \in \mathcal{K}}u_{ij} \leq N$ (where number $N=4$ is the number of total recommendations). After the preprocessing, we ended up with a content catalogue of size $K=1060$ and $K=757$ for MovieLens and Last.fm traces respectively.



\subsection{Simulation Setup}\label{sec:simulation-setup}
\myitem{Content Demand.} The users generate $40000$ requests for contents in a catalogue $\mathcal{K}$; requests are either \textit{direct} with probability $\mathbf{p_{0}} \sim Zipf(s)$ ($s$ the exponent of the Zipf law) for any content, or \textit{recommended} with probability $\frac{1}{N}$ for each of the recommended contents. We consider scenarios with exponent $s\in [0.4, 0.8]$ and $N=4$. Unless otherwise stated, we set the default value $\alpha=0.8$, similarly to the statistics in~\cite{gomez2016netflix}.

%
%

\myitem{Caching Policy.} We consider a popularity based caching policy, where the $C$ most popular (w.r.t. $\mathbf{p_{0}}$) contents are locally cached in the base station. This policy is optimal in a single cache network, when no recommendation system is employed.

\myitem{Recommendation policy.} We simulate scenarios under the following three recommendation policies:
\begin{itemize}[leftmargin=*]
\item \emph{No Recommendation}: This is is also a baseline scenario, where users request contents only based on $\mathbf{p_{0}}$ (or, equivalently $a=0$). \item \emph{Myopic policy}: Cache-aware recommendations using the algorithm of \secref{sec:greedy-algorithm}, which optimizes recommendations assuming single-step content requests. This policy relates to the previous works of~\cite{sermpezis-sch-globecom,chatzieleftheriou2017caching}.
\item \emph{Proposed Policy/CARS}: Cache-aware recommendations using \textit{CARS}, which optimizes recommendations for sequential content consumption.
\end{itemize}

\begin{figure}
\centering
\subfigure[MovieLens,~$s = 0.7$]{\includegraphics[width=0.6\columnwidth]{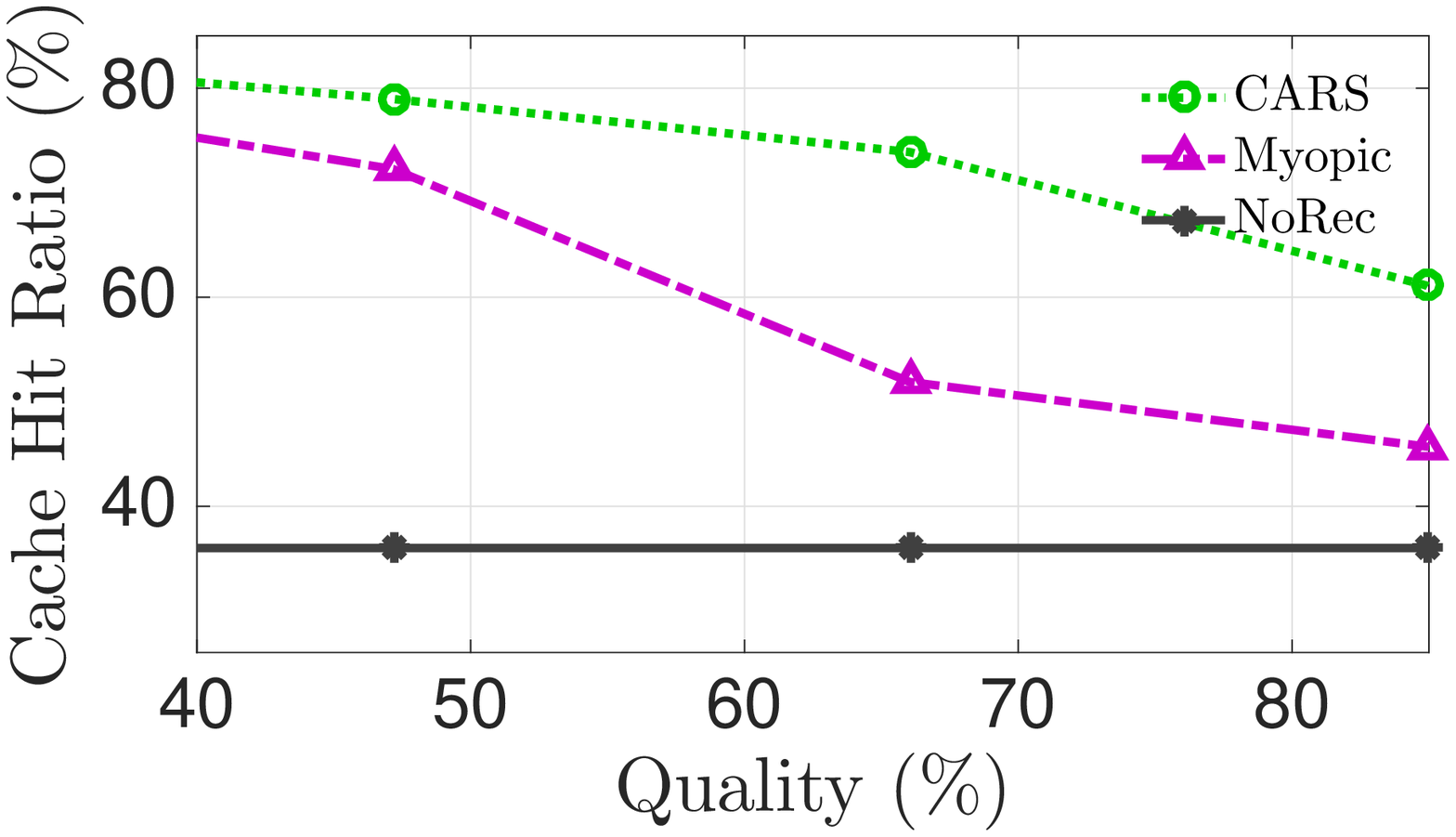}\label{fig:movieles-quality}}
\subfigure[Last.fm,~$s=0.4$]{\includegraphics[width=0.6\columnwidth]{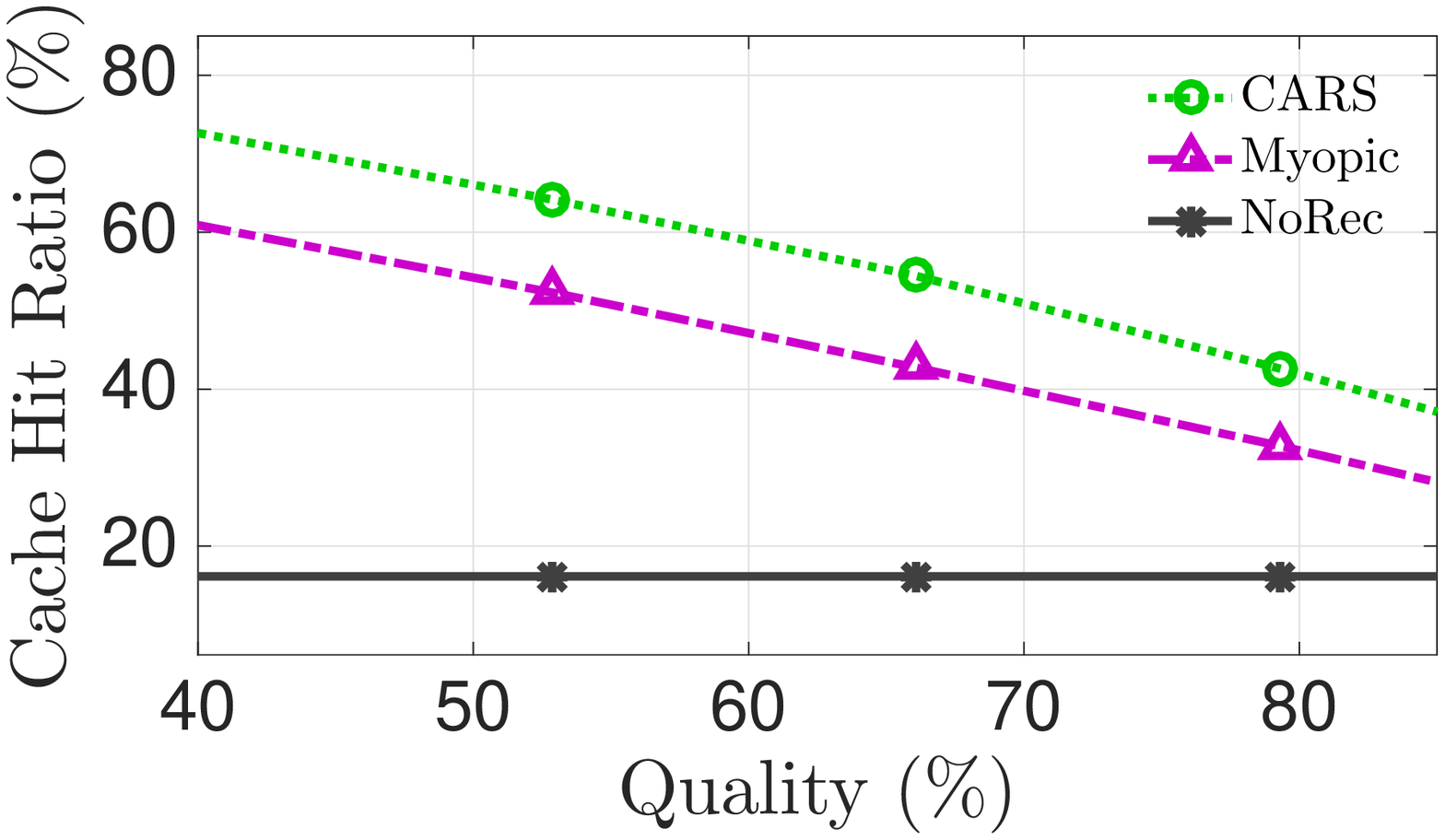}\label{fig:lastfm-quality}}
\caption{Cache Hit Ratio vs Quality~$N=4,~C/K=5\%$.}
\end{figure}

\subsection{Results}\label{sec:simulation-results}
We compare the three recommendation policies in scenarios with varying $q$ (minimum quality of recommendations - see \secref{sec:optimization-problem}), cache size $C$, probability to request a recommended content $a$, and $N$ recommended contents. For simplicity, we assume costs $x=0$ for cached contents and $x=1$ for non-cached. Hence, the cost becomes equivalent to the cache hit ratio ($CHR = (1-a)\cdot \mathbf{p_{0}}^{T} \cdot (I-aY)^{-1} \cdot \mathbf{x}$), which we use as metric to measure the achieved performance in our simulations.

\myitem{Impact of Quality of Recommendations.}
Recommending cached contents becomes trivial, if no quality in recommendations is required. However, the primary goal of a content provider is to satisfy its users, which translates to high quality recommendations. In the following, we present results that show that the proposed \textit{CARS can always achieve a good trade-off between cache hit ratio and quality of recommendation, significantly outperforming baseline approaches}.

%

In Figures~\ref{fig:movieles-quality} and~\ref{fig:lastfm-quality} we present the achieved cache hit ratio (y-axis) of the four recommendation policies for the MovieLens and Last.fm, datasets, respectively, in scenarios where the \text{recommender quality} is imposed to be above a predefined threshold $q$ (x-axis). The first observation is that \textit{Myopic} and \textit{CARS} achieve their goal to increase the CHR compared to the baseline case of \textit{NoRec}. The absolute gains for both policies increases for lower values of $q$, because for lower $q$ there is more flexibility in recommendations. For high values of $q$, close to $100\%$, less recommendations that ``show the cache'' are allowed, and this leads to lower gains. However, even when the quality reaches almost $100\%$, the gains of \textit{CARS} remain significant. In fact, the relative performance of \textit{CARS} over the \textit{Myopic} increases with $q$, which indicates that non-\textit{Myopic} policies are more efficient when high recommendation quality is required. 

Moreover, comparing Figures~\ref{fig:movieles-quality} and~\ref{fig:lastfm-quality} reveals that the achievable gains depend also on the similarity matrix $U$. While in Fig.~\ref{fig:lastfm-quality} both cache-aware recommendation policies follow a similar trend (for varying $q$), in Fig.~\ref{fig:movieles-quality} for the larger dataset of MovieLens, the performance of \textit{CARS} decreases much less compared to \textit{Myopic} with $q$. 




\begin{figure}
\centering
\subfigure[MovieLens,~$s=0.5$]{\includegraphics[width=0.6\columnwidth]{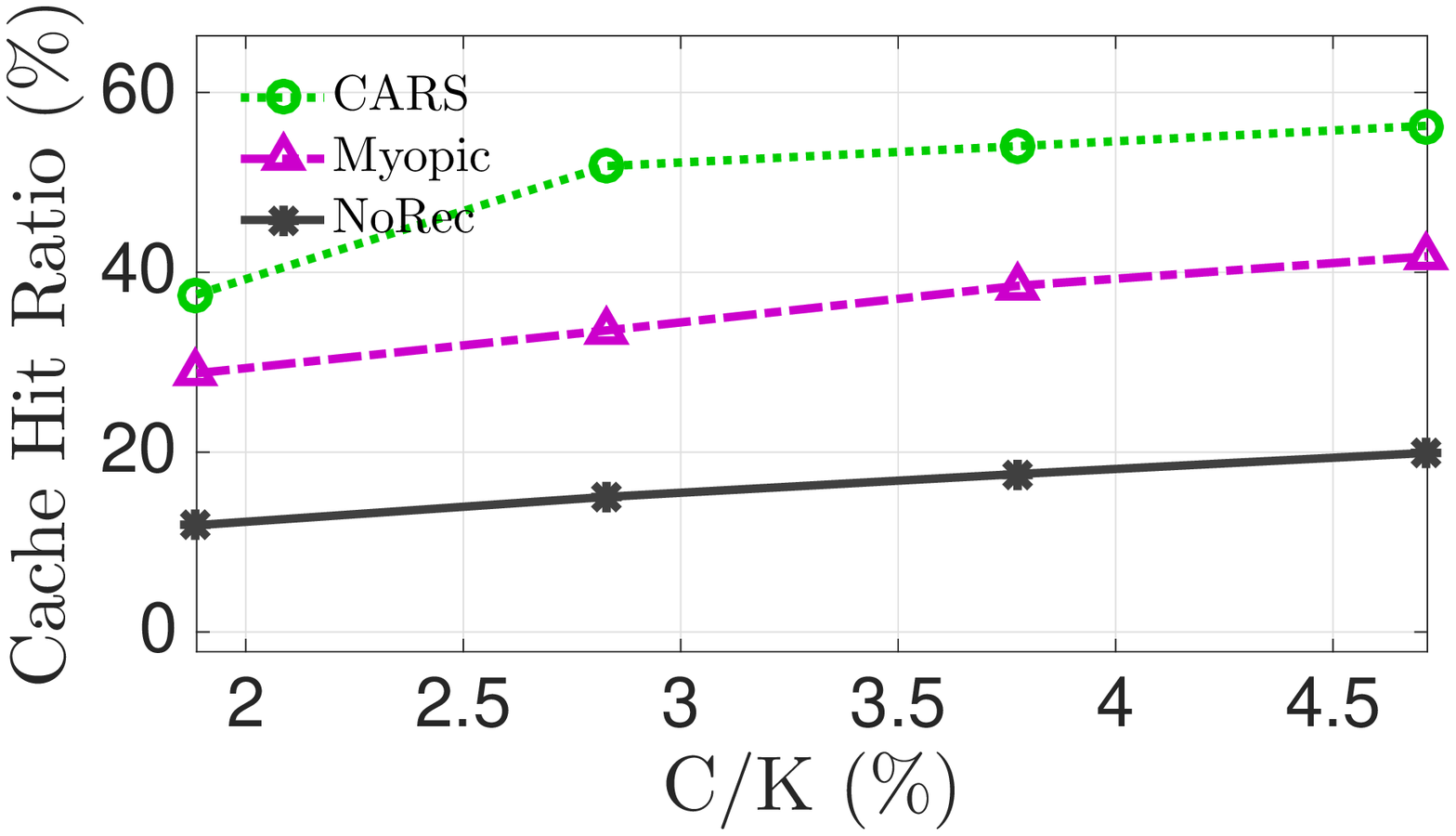}\label{fig:movieles-cache-size}}
\subfigure[Last.fm,~$s=0.4$]{\includegraphics[width=0.6\columnwidth]{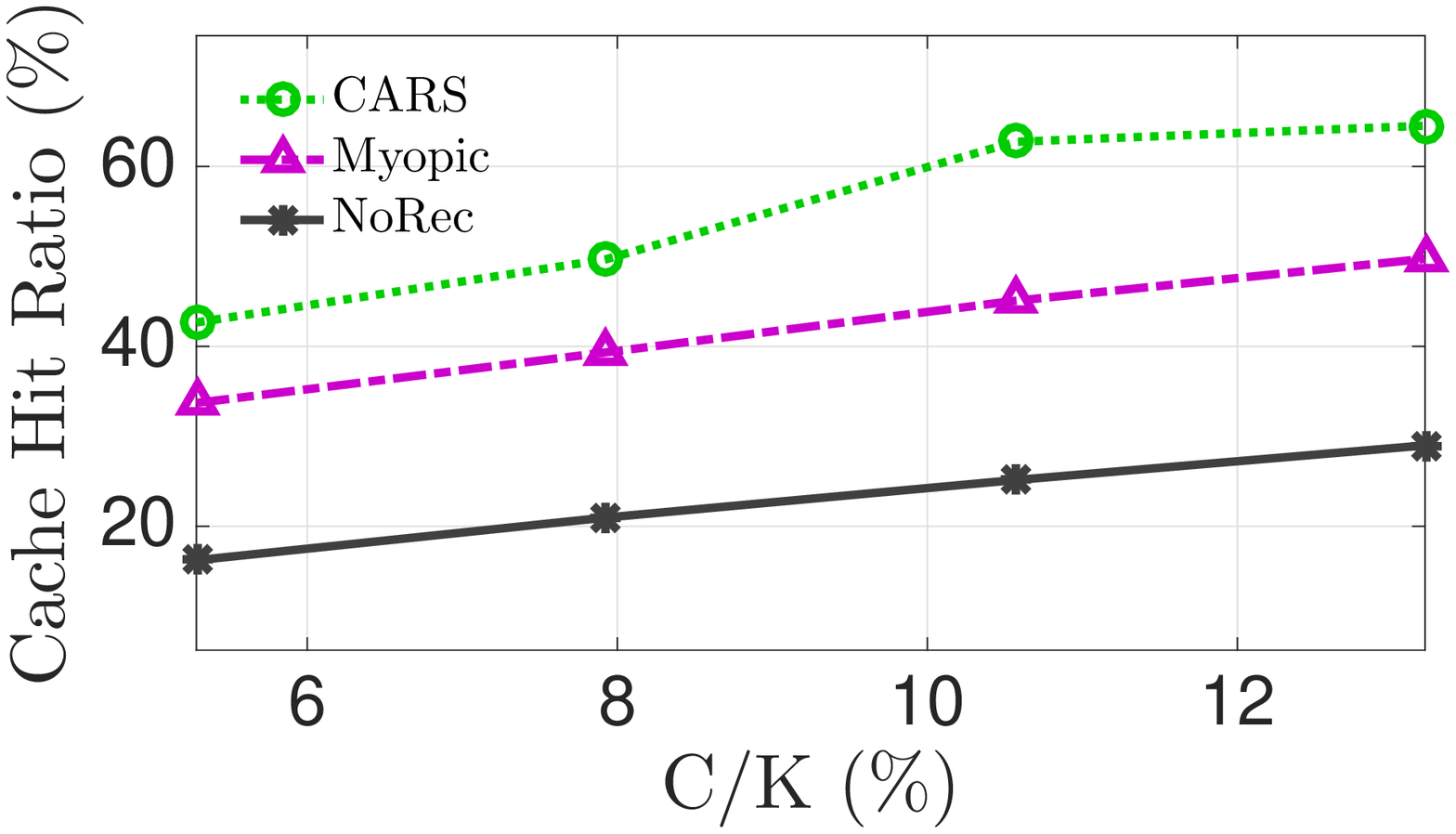}\label{fig:lastfm-cache-size}}
\caption{Cache Hit Ratio vs Relative Cache size,~$Q=80\%,~N=4$.}
\end{figure}

\myitem{Impact of Caching Capacity.}
In Figures~\ref{fig:movieles-cache-size} and~\ref{fig:lastfm-cache-size} we investigate the performance of the recommendation policies with respect to the cache size, for a fixed value of the recommender quality $q$. The proposed algorithm, outperforms significantly the other two policies. For example, in Fig.~\ref{fig:lastfm-cache-size}, for $C/K=8\%$ it achieves a $25\%$ improvement over the \textit{Myopic} algorithm. Even in the case of the MovieLens dataset (Fig.~\ref{fig:movieles-cache-size}), where the \textit{Myopic} algorithm can only marginally improve the cache hit ratio, \textit{CARS} still achieves significant gains. In total, in all scenarios we considered, \textit{the relative caching gains from the proposed cache-aware recommendation policy (over the no-recommendation case) are consistent and even increase with the caching size.}

\myitem{Impact of Sequential Content Consumption.}
\textit{CARS} takes into account the fact that users consume more than one content sequentially, and optimizes recommendations based on this. On the contrary the \textit{Myopic} algorithm (similarly to previous works~\cite{sermpezis-sch-globecom},~\cite{chatzieleftheriou2017caching}) considers single content requests. Therefore, Algorithm~\ref{alg:ADMM} is expected to perform better as the average number of consecutive requests by a user increases. The simulation results in Fig.~\ref{fig:num-access} validate this argument. We simulate a scenario of a small catalogue $K=100, C=4, N=3, s=0.6, q=90\%$ and a $U$ matrix with an $\overline{R}=4$ related contents per content on average, where we vary the number of consecutive requests by each user. It can be seen that the \textit{Myopic} algorithm increases the cache hit ratio when the users do a few consecutive requests (e.g., 3 or 4); after this point the cache hit ratio remains constant. However, under \textit{CARS}, not only the increase in the cache hit ratio is higher, but it increases as the number of consecutive requests increase. This is a promising message for real content services (such as YouTube, Netflix, Spotify, etc.) where users tend to consume sequentially many contents.

\myitem{Impact of Probability $a$.}
The probability $a$ represents the \textit{frequency} that a user follows a recommendation rather than requesting for an arbitrary content (restart probability, e.g., through the search bar in YouTube). The value of $a$ indicates the influence of the recommendation system to users; in the cases of YouTube and Netflix it is approximately 0.5 and 0.8 respectively~\cite{RecImpact-IMC10},~\cite{gomez2016netflix}. In Fig.~\ref{fig:restart} we present the performance of the two cache-aware recommendation policies for varying values of $a$. The higher the value of $a$, the more frequently a user follows a recommendation, and thus the higher the gains  from the cache-aware recommendation policies. However, while the gain from the \textit{Myopic} algorithm increases linearly with $a$, the gains from the proposed \textit{CARS} increase superlinearly. This is due to the fact that Algorithm~\ref{alg:ADMM} takes into account the effect of probability $a$ when selecting the recommendations (e.g., see the objective function of \probref{problem:infinite-step-admm}).

\begin{figure}
\centering
\subfigure[vs \# of Accesses (synthetic)]{\includegraphics[width=0.6\columnwidth]{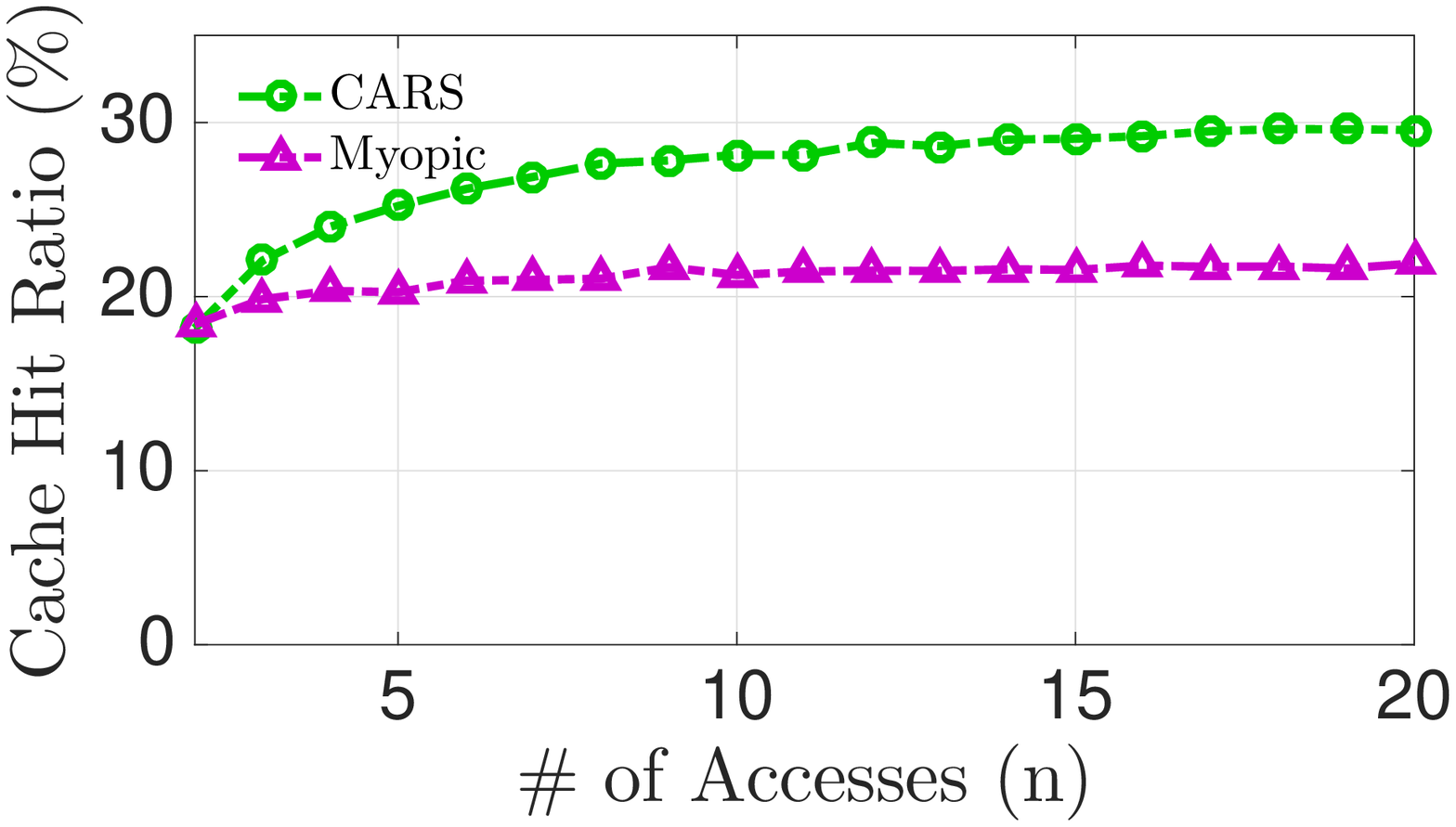}\label{fig:num-access}}
\subfigure[vs Probability (a) (synthetic)]{\includegraphics[width=0.6\columnwidth]{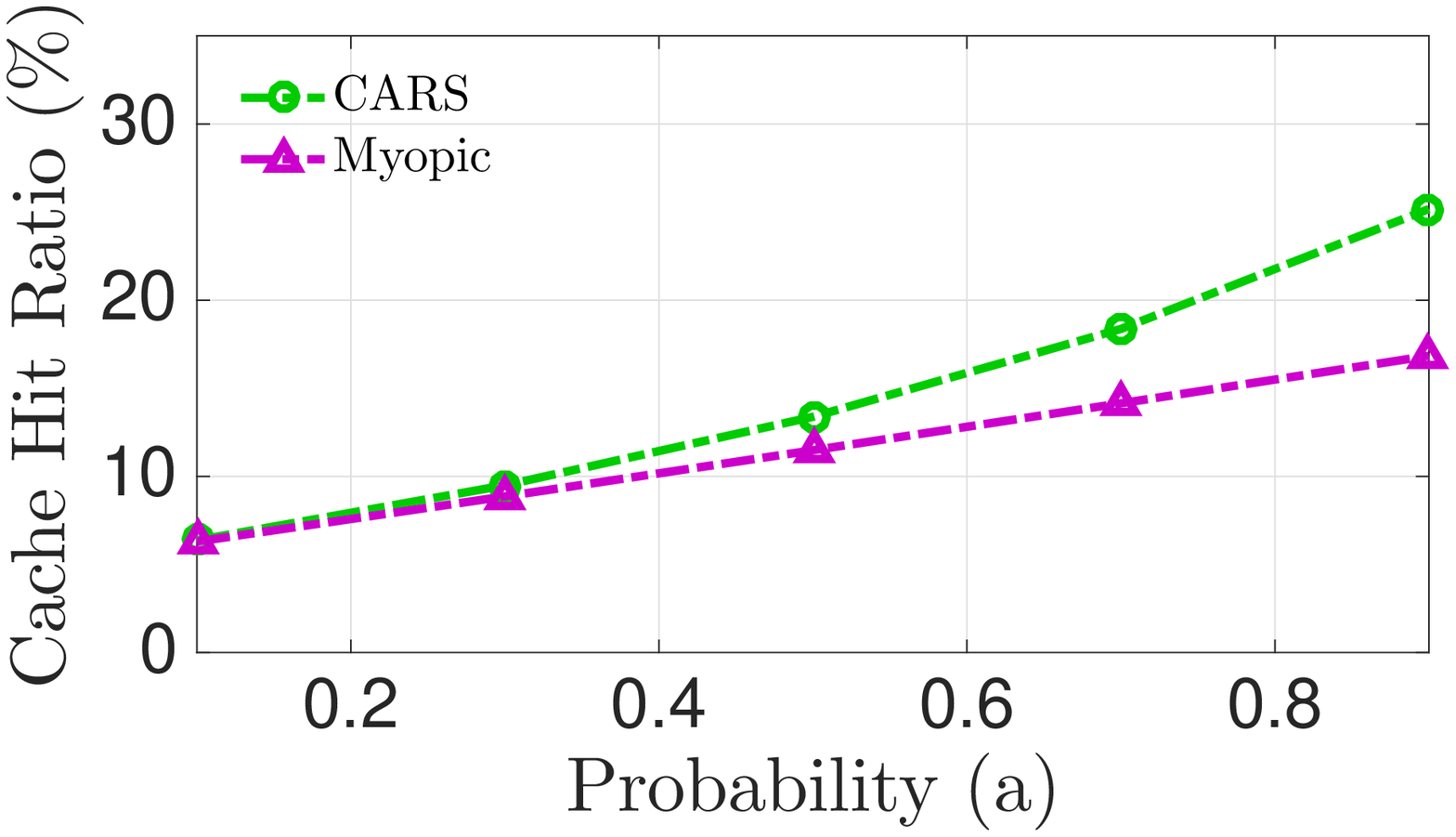}\label{fig:restart}}
\caption{CHR for $N=3$,~(synthetic scenario)\\(a)~$q=85\%,~s=0.2,~C/K=4\%$\\~(b)~$q=90\%,~s=0.6,~C/K=2.5\%$}
\end{figure}



\section{Related Work}
\label{sec:related}


\myitem{Mobile Edge Caching.} Deploying small cells (SCs) over the existing macro-cell networks infrastructure, has been extensively studied and is considered a promising solution that could handle the existing and predicted massive data demands~\cite{hetnets-commag-2012,femtocells-survey-jsac-2012,hetnets-paradigm-shift}. However, this densification of the cellular network will undoubtedly impose heavier load to the backhaul network. Taking advantage of the skewness in traffic demand, it has been suggested that caching popular content at the ``edge'' of the network, at SCs~\cite{femto}, user devices~\cite{sermpezis2014,Hui-offloading,Pavlos-Offload2016}, or vehicles~\cite{Whitbeck-offloading,vigneri2016} can significantly relieve the backhaul. 
Our work proposes an orthogonal and complementary approach for increasing the caching efficiency. We modify the recommendation algorithm to point the users towards the cached content, when this is possible and satisfies the quality of user experience. This can bring further gains in cache hit ratio, on top of existing caching algorithms/architectures.
\oldtext{
However, it is essential to keep in mind that the video streaming options a user has, are drawn out of an immense content catalogue and in a way, it becomes a necessity to reduce this size by manipulating users' requests. Therefore, networking research community is slowly taking steps towards employing already existing means (such as recommendations or related lists) to nudge the user in favor of the cached content.
}


\myitem{Caching and Recommendations.} The interplay between recommendation systems and caching has been only recently considered in literature, e.g., for peer-to-peer networks~\cite{content-recommendation-swarming}, CDNs~\cite{what-should-you-cache-nossdav,cache-centric-video-recommendation}, or mobile/cellular networks~\cite{sermpezis-sch-globecom,chatzieleftheriou2017caching,liu2018learning,sch-chants-2016}. Closer to our study, are the works in~\cite{cache-centric-video-recommendation,chatzieleftheriou2017caching,liu2018learning,sermpezis-sch-globecom} that consider the promotion/recommendation of contents towards maximizing the probability of hitting a local cache. Leveraging the high influence of YouTube recommendations to users, the authors of~\cite{cache-centric-video-recommendation} propose a reordering method for the \textit{related list} of videos; despite its simplicity, this method is shown to improve the efficiency of CDNs. \cite{chatzieleftheriou2017caching} considers the joint problem of caching and recommendations, and proposes a heuristic algorithm that initially places contents in a cache (based on content relations) and then recommends contents to users (based on cached contents). At the selection of the recommendations, \cite{chatzieleftheriou2017caching} considers a single request per user, whereas our work considers a sequential content consumption model, which is closer to user behavior in services such as YouTube, Netflix, Spotify, etc. Similarly to~\cite{chatzieleftheriou2017caching}, in~\cite{liu2018learning}, a single access user is considered. The caching policy in~\cite{liu2018learning} is based on machine learning techniques, the users' \textit{behavior} is estimated through the users' interaction with the recommendations and this knowledge is being exploited at the next BS cache updates. Finally,~\cite{sermpezis-sch-globecom} studies the problem of recommendation-aware caching (in contrast to cache-aware recommendations in this paper). Assuming a content provider/service that offers alternative content recommendation or delivery,~\cite{sermpezis-sch-globecom} proposes near-optimal approximation algorithms for content placement in mobile networks with single-cell and multi-cell (e.g., similarly to~\cite{femto}) user association. 



\section{Conclusions}
\label{sec:conclusions}

In this paper we studied the problem of cache-friendly content recommendations in the context of mobile edge caching. We first introduced a model for sequential content requests over a recommendation system, which captures the user behavior in popular services such as YouTube, Netflix, Spotify, etc. Then, we proposed \textit{CARS}, a cache-aware recommendation algorithm that can increase the caching efficiency (for the network operator), without losing in quality of recommendations (for the user). Our simulation results showed that \textit{CARS} outperforms methods that do not take into account sequential content consumption. A promising future research direction work is to consider the joint problem of caching and recommendations --under sequential requests-- in order to fully exploit the potential of modern communication networks.

\bibliographystyle{ieeetr}


\end{document}